\documentclass{article}
\usepackage{spconf}
\usepackage{algorithm,graphicx}
\usepackage{algorithmic}
\usepackage{url}
\usepackage{cite}
\usepackage[usenames]{color}
\usepackage{amsfonts}
\usepackage{fancyhdr}
\usepackage{listings}%
\usepackage{arydshln}
\usepackage{amsmath,amssymb,amsthm}%
\usepackage{graphicx,psfrag,caption,subcaption}
\usepackage{listings}%
\usepackage{arydshln}
%\usepackage{multicol,lipsum}%
%\usepackage{subfig}%
% Packages for Tikz
\usepackage{tikz}
\usetikzlibrary{shapes,arrows}
\usepackage{acronym,comment}

\acrodef{ris}[RIS]{reconfigurable intelligent surface}
\acrodef{isac}[ISAC]{integrated sensing and communication}
\acrodef{dfbs}[DFBS]{dual function radar communication base station}
\acrodef{ue}[UE]{user equipment}
\acrodef{sinr}[SINR]{signal-to-interference-plus-noise ratio}
\acrodef{snr}[SNR]{signal-to-noise ratio}
\acrodef{mui}[MUI]{multi-user-interference}
\acrodef{mimo}[MIMO]{multiple input multiple output}
\acrodef{ura}[URA]{uniform rectangular array}
\acrodef{nlos}[NLoS]{non-line-of-sight}
\acrodef{los}[LoS]{line-of-sight}
\acrodef{wrt}[w.r.t.]{with respect to}
\acrodef{rcc}[RCC]{radar-communication-coexistence}
\acrodef{ula}[ULA]{uniform linear array}
\acrodef{bs}[BS]{base station}
\acrodef{cdf}[c.d.f.]{cumulative distribution function}
\acrodef{sdp}[SDP]{semi-definite program}
\acrodef{iid}[IID]{independent and identically distributed}
\acrodef{dris}[DP-HRIS]{dynamically programmable hybrid reconfigurable intelligent surface}
\acrodef{dac}[DAC]{digital-to-analog converter}
\acrodef{tdm}[TDM]{time division multiplexing}
\acrodef{psd}[PSD]{positive semi-definite}
\acrodef{sota}[SoTA]{state-of-the-art}
\acrodef{idfbs}[ISAC BS]{integrated sensing and communication base station}
\acrodef{mimo}[MIMO]{multiple-input multiple-output}
\acrodef{siso}[SISO]{single-input single-output}
\acrodef{sep}[SEP]{symbol error probability}

\usepackage{algorithmic}%
\newtheorem{mytheo}{Theorem}%
%
%

% Example definitions.

% circular convolution
\def\cast{{
   \mathord{
      \hbox to 0em{
         \ooalign{
	   \smash{\hbox{$\ast$}}\crcr
	   \smash{\hskip-1pt\Large\hbox{$\circ$}} }
	 \hidewidth}
      \phantom{\bigcirc}
} }}
% AMIR's macros

% Hatted english letters

%abbreviations for greek letters

%Uppercase Greek

%Uppercase bold Greek

%Abbreviatians for blackboard bold

\newcommand{\rH}{^{ \raisebox{1pt}{$\rm \scriptscriptstyle H$}}}
\newcommand{\rT}{^{ \raisebox{1.2pt}{$\rm \scriptstyle T$}}}

% mathematical environments
%\theoremstyle{plain}
\iffalse
    \newtheorem{theorem}{Theorem}[section]
    \newtheorem{lemma}[theorem]{Lemma}
    
    \newtheorem{prop}{Proposition}[section]
    \newtheorem{claim}{Claim}[section]

    %\theoremstyle{definition}
    \newtheorem{definition}{Definition}[section]
    %\theoremstyle{remark}
    \newtheorem{question}{Question}[section]
    \newtheorem{coro}{Corollary}[section]

\fi

%\newcommand{\bds}{\begin {description}}
%\newcommand{\eds}{\end {description}}
\newcommand{\bds}{\begin {itemize}}
\newcommand{\eds}{\end {itemize}}
\newcommand{\bdf}{\begin{definition}}
\newcommand{\blm}{\begin{lemma}}
\newcommand{\edf}{\end{definition}}
\newcommand{\elm}{\end{lemma}}
\newcommand{\bthm}{\begin{theorem}}
\newcommand{\ethm}{\end{theorem}}
\newcommand{\bprp}{\begin{prop}}
\newcommand{\eprp}{\end{prop}}
\newcommand{\bcl}{\begin{claim}}
\newcommand{\ecl}{\end{claim}}
\newcommand{\bcr}{\begin{coro}}
\newcommand{\ecr}{\end{coro}}
\newcommand{\bquest}{\begin{question}}
\newcommand{\equest}{\end{question}}

%Abbreviatians for other symbols

\newcommand{\larrow}{{\larrow}}

% Abrreviations

% mathematical functions and constants
%\def\Re{\ensuremath{\hbox{Re}}}
%\def\Im{\ensuremath{\hbox{Im}}}

\newcommand{\argmin}{\ensuremath{\mathrm{arg}\min}}
\newcommand{\argmax}{\ensuremath{\mathrm{arg}\max}}

	% gebruik Ran en Ker voor X_2
	% gebruik Ran en Ker voor X_2

	% closure

%Abbreviatians for caligraphic letters

\newcommand{\cC}{{\ensuremath{\mathcal{C}}}}

\newcommand{\cF}{{\ensuremath{\mathcal{F}}}}

\newcommand{\cI}{{\ensuremath{\mathcal{I}}}}

\newcommand{\cN}{{\ensuremath{\mathcal{N}}}}

\newcommand{\cP}{{\ensuremath{\mathcal{P}}}}
\newcommand{\cQ}{{\ensuremath{\mathcal{Q}}}}

% 
%%%   B   L   A   C   K   B   O   A   R   D         B   O   L   D   %%%%%%%%%%%%%%%%%%%%%%%%%%%%
%%%%%%%%%%%%%%%%%%%%%%%%%%%%%%%%%%%%%%%%%%%%%%%%%%%%%%%%%%%%%%%%%%%%%%%%%%%%%%%%%%%%%%%%%%%%%%%%

\def\mbC{{\ensuremath{\mathbb C}}}

\def\mbE{{\ensuremath{\mathbb E}}}

% abbreviations for bold english letters (vectors)

\newcommand{\va}{{\ensuremath{{\mathbf{a}}}}}

\newcommand{\vg}{{\ensuremath{{\mathbf{g}}}}}

\newcommand{\vh}{{\ensuremath{{\mathbf{h}}}}}

\newcommand{\vq}{{\ensuremath{{\mathbf{q}}}}}

\newcommand{\vt}{{\ensuremath{{\mathbf{t}}}}}
\newcommand{\vu}{{\ensuremath{{\mathbf{u}}}}}

\newcommand{\vx}{{\ensuremath{{\mathbf{x}}}}}

\newcommand{\vz}{{\ensuremath{{\mathbf{z}}}}}

% abbreviations for matrices

\newcommand{\mH}{{\ensuremath{\mathbf{H}}}}

\newcommand{\mI}{{\ensuremath{\mathbf{I}}}}

\newcommand{\mR}{{\ensuremath{\mathbf{R}}}}

\newcommand{\mW}{{\ensuremath{\mathbf{W}}}}

%  []
% \def\qed{\hfill {$\Box$}}
% The tex-book p.106: adapted `signed' macro to ensure qed always at end of line
\usepackage{latexsym}

\def\IC{\mathbb C}
\def\IN{\mathbb N}
\def\IZ{\mathbb Z}
\def\IR{\mathbb R}

\iffalse
    % The set of complex numbers
    \def\IC{{
       \mathord{
	  \hbox to 0em{
	     \hskip-4pt
	     \ooalign{
	       \smash{\hskip1.9pt\raise2.6pt\hbox{$\scriptscriptstyle |$}}\crcr
	       \smash{\hbox{\rm\sf C}} }
	     \hidewidth}
	  \phantom{\hbox{\rm\sf C}}
    } }}
    % The set of natural numbers
    \def\IN{
	{\ooalign{
       \smash{\hskip2.2pt\raise1.5pt\hbox{$\scriptscriptstyle |$}}
       \vphantom{}\crcr
       \hbox{\rm\sf N}
	    }}
	    } 
    % The set of integers
    \def\IZ{
       {\ooalign{
       \smash{\hskip1.9pt\raise0pt\hbox{\rm\sf Z}}
       \vphantom{}\crcr
       \hbox{\rm\sf Z}
	    }}
	    } 
    % The set of reals
    \def\IR{
	{\ooalign{
       \smash{\hskip2.2pt\raise1.5pt\hbox{$\scriptscriptstyle |$}}
       \vphantom{}\crcr
       \smash{\hskip2.2pt\raise3.3pt\hbox{$\scriptscriptstyle |$}}
       \vphantom{}\crcr
       \hbox{\rm\sf R}
	    }}
	    } 
\fi

% from AMSTEX \sphat
\def\shat{^{\mathchoice{}{}%
 {\,\,\smash{\hbox{\lower4pt\hbox{$\widehat{\null}$}}}}%
 {\,\smash{\hbox{\lower3pt\hbox{$\hat{\null}$}}}}}}

% hacks for bold \Sigma and \Theta

\def\bSigma{{
      \ooalign{
      \smash{\hskip.4pt\raise.4pt\hbox{$\Sigma$}}\vphantom{}\crcr
      \smash{\hskip.7pt\raise.6pt\hbox{$\Sigma$}}\vphantom{}\crcr
      \smash{\hbox{$\Sigma$}}\vphantom{$\Sigma$}}
      \vphantom{\hbox{$\Sigma$}}
      }}
\def\bTheta{{
      \ooalign{
      \smash{\hskip.5pt\raise.5pt\hbox{$\Theta$}}\vphantom{}\crcr
      \smash{\hskip.0pt\raise.1pt\hbox{$\Theta$}}\vphantom{}\crcr
      \smash{\hbox{$\Theta$}}\vphantom{$\Theta$}}
      \vphantom{\hbox{$\Theta$}}
      }}
\def\bDelta{{
      \ooalign{
      \smash{\hskip.4pt\raise.4pt\hbox{$\Delta$}}\vphantom{}\crcr
      \smash{\hskip.7pt\raise.6pt\hbox{$\Delta$}}\vphantom{}\crcr
      \smash{\hbox{$\Delta$}}\vphantom{$\Delta$}}
      \vphantom{\hbox{$\Delta$}}
      }}
\def\bLambda{{
      \ooalign{
      \smash{\hskip.5pt\raise.5pt\hbox{$\Lambda$}}\vphantom{}\crcr
      \smash{\hskip.0pt\raise.1pt\hbox{$\Lambda$}}\vphantom{}\crcr
      \smash{\hbox{$\Lambda$}}\vphantom{$\Lambda$}}
      \vphantom{\hbox{$\Lambda$}}
      }}

% Bordermatrices with square brackets [] iso ()
\makeatletter

\def\bordermatrix#1{\begingroup \m@th
  \@tempdima 8.75\p@
  \setbox\z@\vbox{%
    \def\cr{\crcr\noalign{\kern2\p@\global\let\cr\endline}}%
    \ialign{$##$\hfil\kern2\p@\kern\@tempdima&\thinspace\hfil$##$\hfil
      &&\quad\hfil$##$\hfil\crcr
      \omit\strut\hfil\crcr\noalign{\kern-\baselineskip}%
      #1\crcr\omit\strut\cr}}%
  \setbox\tw@\vbox{\unvcopy\z@\global\setbox\@ne\lastbox}%
  \setbox\tw@\hbox{\unhbox\@ne\unskip\global\setbox\@ne\lastbox}%
  \setbox\tw@\hbox{$\kern\wd\@ne\kern-\@tempdima\left[\kern-\wd\@ne
    \global\setbox\@ne\vbox{\box\@ne\kern2\p@}%
    \vcenter{\kern-\ht\@ne\unvbox\z@\kern-\baselineskip}\,\right]$}%
  \null\;\vbox{\kern\ht\@ne\box\tw@}\endgroup}
\makeatother

\makeatletter
\def\argmin{\mathop{\operator@font arg\,min}}
\def\argmax{\mathop{\operator@font arg\,max}}
\makeatother

\newcommand{\bea}{\begin{array}}
\newcommand{\ena}{\end{array}}
\newcommand{\beq}{\begin{equation}}
\newcommand{\enq}{\end{equation}}

\newcommand{\beqa}{\begin{eqnarray}}
\newcommand{\enqa}{\end{eqnarray}}

\newcommand{\beqan}{\begin{eqnarray*}}
\newcommand{\enqan}{\end{eqnarray*}}

\newcommand{\AL}{\begin{enumerate}}
\newcommand{\ALE}{\end{enumerate}}

%\newcommand{\VnH}{\Va^{\perp{\rm\scriptscriptstyle *}}}
%\newcommand{\UnH}{\Ua^{\perp{\rm\scriptscriptstyle *}}}

% anti-diagonal dots
\def\addots{\mathinner{
    \mkern1mu\raise0pt\vbox{\kern7pt\hbox{.}}
    \mkern2mu\raise4pt\hbox{.}
    \mkern2mu\raise7pt\hbox{.}
    \mkern1mu}}

% steep diagonal dots
\def\sddots{\mathinner{
    \mkern.8mu\raise7pt\hbox{.}
    \mkern.8mu\raise4pt\hbox{.}
    \mkern.8mu\raise0pt\vbox{\kern7pt\hbox{.}}
    \mkern1mu}}

% steep anti-diagonal dots
\def\saddots{\mathinner{
    \mkern.2mu\raise0pt\vbox{\kern7pt\hbox{.}}
    \mkern.2mu\raise4pt\hbox{.}
    \mkern.2mu\raise7pt\hbox{.}
    \mkern1mu}}

% bar denotes conjugate
% tilde denotes noise
% \newcommand{\Ac}{\bar{A}} [Updated by Prasobh]

% whitened denoted by underline

%  []
% \def\qed{\hfill {$\Box$}}
% The tex-book p.106: adapted `signed' macro to ensure qed always at end of line

%  [+] and [-]
\def\sqplus{\mathbin{
	{\ooalign{\hfil\raise.3ex\hbox{\scriptsize
	+}\hfil\crcr\mathhexbox274\crcr\mathhexbox275}}
	}} 
\def\sqminus{\mathbin{
	{\ooalign{\hfil\raise.3ex\hbox{\scriptsize
	--}\hfil\crcr\mathhexbox274\crcr\mathhexbox275}}
	}}

% The set of complex numbers
\def\IC{{
   \mathord{
      \hbox to 0em{
	 \hskip-4pt
         \ooalign{
	   \smash{\hskip1.9pt\raise2.6pt\hbox{$\scriptscriptstyle |$}}\crcr
	   \smash{\hbox{\rm\sf C}} }
	 \hidewidth}
      \phantom{\hbox{\rm\sf C}}
} }}
% The set of natural numbers
\def\IN{
    {\ooalign{
   \smash{\hskip2.2pt\raise1.5pt\hbox{$\scriptscriptstyle |$}}\vphantom{}\crcr
   \hbox{\sf N}
	}}
	} 
% The set of integers
\def\IZ{
    {\ooalign{
   \smash{\hskip1.9pt\raise0pt\hbox{$\sf Z$}}\vphantom{}\crcr
   \hbox{\sf Z}
	}}
	} 
% The set of reals
\def\IR{
    {\ooalign{
   \smash{\hskip2.2pt\raise1.5pt\hbox{$\scriptscriptstyle |$}}\vphantom{}\crcr
   \smash{\hskip2.2pt\raise3.3pt\hbox{$\scriptscriptstyle |$}}\vphantom{}\crcr
   \hbox{\sf R}
	}}
	} 

% hacks for bold \Sigma and \Theta in TimesRoman: borrow from CM
\DeclareMathAlphabet{\mathcmb}{OT1}{cmr}{b}{n}

\def\bSigma{\ensuremath{\mathcmb{\Sigma}}}
\def\bLambda{\ensuremath{\mathcmb{\Lambda}}}

\def\bTheta{\ensuremath{\mathcmb{\Theta}}}

% Indented 'list' 
% Usage: \SI
%          blabla
%        \EI

\newcommand{\SI}{\begin{indlist}}
\newcommand{\EI}{\end{indlist}}

% Hanging indent after introline
% usage:
%    \hanglines{where\ }
%    $U$ is block lower and unitary,\\
%    $V$ is block upper and isometric, and\\
%    $T_0$ is upper with upper inverse.
%

%
\newcommand{\DL}{\begin{dashlist}}
\newcommand{\DLE}{\end{dashlist}}

% whitened denoted by underline

% composite denoted by undertilde
%\usepackage{amsmath}

% copied from amsmath: (don't load it all because destroys bold Sigma...)
\makeatletter
\def\setboxz@h{\setbox\z@\hbox}
\def\wdz@{\wd\z@}
\def\boxz@{\box\z@}
\def\underset#1#2{\binrel@{#2}%
  \binrel@@{\mathop{\kern\z@#2}\limits_{#1}}}
\def\binrel@#1{\begingroup
  \setboxz@h{\thinmuskip0mu
    \medmuskip\m@ne mu\thickmuskip\@ne mu
    \setbox\tw@\hbox{$#1\m@th$}\kern-\wd\tw@
    ${}#1{}\m@th$}%
  \edef\@tempa{\endgroup\let\noexpand\binrel@@
    \ifdim\wdz@<\z@ \mathbin
    \else\ifdim\wdz@>\z@ \mathrel
    \else \relax\fi\fi}%
  \@tempa
}
\let\binrel@@\relax%
\makeatother

% best choice:

%----- Tikz  ------

%\tikzstyle{block} = [draw, fill=blue!20, rectangle, 
%  minimum height=3em, minimum width=6em]
\tikzstyle{block} = [draw, rectangle, 
minimum height=3em, minimum width=4em]

\tikzstyle{sum} = [draw,  circle, node distance=1cm]
\tikzstyle{input} = [coordinate]
\tikzstyle{output} = [coordinate]
\tikzstyle{pinstyle} = [pin edge={to-,thin,black}]

% ---------------------   

% Title.
% ------
\title{Quantized Precoding and RIS-Assisted Modulation for \\Integrated Sensing and Communications Systems} 

%Quantized Precoding for ISAC systems

%
% Single address.
% ---------------
\name{R.S. Prasobh Sankar and  Sundeep Prabhakar Chepuri}

\address{Indian Institute of Science, Bengaluru, India
}

\begin{document}
	\ninept
	\maketitle

	\begin{abstract}
%		{\color{cyan}[To be updated in next iteration]}

{ \color{black} In this paper, we present a novel reconfigurable intelligent surface (RIS)-assisted integrated sensing and communication (ISAC) system with 1-bit quantization at the ISAC base station. An RIS is introduced in the ISAC system to mitigate the effects of coarse quantization and to enable the co-existence between sensing and communication functionalities. Specifically, we design transmit precoder to obtain 1-bit sensing waveforms having a desired radiation pattern. The RIS phase shifts are then designed to modulate the 1-bit sensing waveform to transmit M-ary phase shift keying symbols to users. Through numerical simulations, we show that the proposed method offers significantly improved symbol error probabilities when compared to MIMO communication systems having quantized linear precoders, while still offering comparable sensing performance as that of unquantized sensing systems. }
	\end{abstract}
	\begin{keywords}
		Integrated sensing and communication, MIMO systems, quantized precoding, reconfigurable intelligent surfaces.
	\end{keywords}
	
	\maketitle

	\section{Introduction} \label{sec:intro}

  \Ac{isac} systems, which carry out both communication and sensing functionalities by sharing hardware and spectral resources are envisioned to play an important role in the next generation of wireless systems operating at millimeter-wave frequencies~\cite{liu2022ISAC_6G}. Realizing systems with large antenna arrays having dedicated high-resolution quantizers, i.e., \acp{dac}, at each antenna significantly increases the radio frequency complexity. Hence, low-resolution quantizers~(e.g., 1-bit quantizers) are preferred for \ac{mimo} communication, radar, and \ac{isac} systems~\cite{jacobsson2017quantized_precoding,cheng2019target_detection_1bit_radar,deng2022one_bit_MIMO_radar,cheng2021transmit_sequence_1bit_isac}. While 1-bit sensing systems offer comparable performance to that of high resolution systems~\cite{cheng2019target_detection_1bit_radar}, the usage of coarse quantization (e.g., 1-bit \ac{dac}) at the transmitter, is found to significantly degrade the communication \ac{sep}, especially at moderate to high \acp{snr}~\cite{jacobsson2017quantized_precoding}.

		Another major challenge in operating at mmWave frequencies is the extreme pathloss and the associated direct path blockages, which  adversely affect the communication and sensing capabilities of the system. A potential solution to combat the adverse propagation environment is to use the so-called \acp{ris}, which are two-dimensional arrays of passive phase shifters. These RIS phase-shifters can be remotely tuned to favorably modify the wireless propagation environment to obtain stronger reliable links and thereby improve performance of wireless systems~\cite{ozdogan2019intelligent, basar2019wireless,renzo2020smart,rajatheva2020white}. Moreover, \acp{ris} can also be used for information transmission by instantaneously adjusting the phase shifts to transmit communication symbols to the \acp{ue}~\cite{guo2020reflecting_modulation,liu2021IRS_passive_info_tran}. Although  \acp{ris} were originally envisaged for wireless communication applications, \acp{ris} have been found to significantly improve the performance of sensing~\cite{buzzi2022foundations,wang2021ris_radar} and \ac{isac} systems~\cite{sankar2021mmWave_JRC,song2022JRC_RIS,he2022risassisted,liu2022joint_passive,wang2022joint_waveform_discrete,sankar2022hybrid_RIS_JRC,hua2022joint,sankar2022isac_ris_beamforming} as well. For instance,  jointly optimizing the transmit precoders and \ac{ris} phase shifts lead to improved communication and sensing performance. Existing methods on \ac{ris}-assisted \ac{isac} systems consider ideal~(i.e., infinite precision) quantization at the \ac{idfbs} and do not account for the presence of low-resolution \acp{dac}. 
  
        In this work,  we consider a multi-user multi-target \ac{ris}-assisted \ac{isac} system with \ac{idfbs} equipped with 1-bit \acp{dac}. We focus on the setting where the targets are directly visible to the \ac{idfbs}, whereas the direct links to the communication users are blocked. We use  \ac{ris} to achieve three objectives: (a) to enable  reliable communication by introducing virtual links from  \ac{idfbs} to  \acp{ue}, (b) to improve \ac{sep} at  \acp{ue} by mitigating the  effects of 1-bit quantization at the \ac{idfbs}, and (c) to enable  coexistence between communication and sensing functionalities.    
        To achieve these objectives, we propose a novel scheme in which the \ac{idfbs} transmits 1-bit quantized sensing waveforms without any communication information. The 1-bit sensing waveform is then modulated by an \ac{ris} {\color{black}having discrete phase shifts}  to transmit $M$-ary phase shift keying~($M$-PSK) communication symbols to the \acp{ue}.  This scheme is advantageous since we can effectively mitigate the adverse effect of 1-bit quantization at the \ac{idfbs} by directly carrying out the modulation at \ac{ris}.  
        
        {\color{black} We design the 1-bit transmit waveform at the \ac{idfbs} to have a desired transmit beampattern towards target directions of interest and the \ac{ris}. The instantaneous \ac{ris} phase shifts are then designed to modulate the 1-bit quantized sensing signal to carry information to the \acp{ue}. We propose a  semi-definite programming solver to design the transmit precoder at the \ac{idfbs} and also provide a closed-form solution for the \ac{ris} phase shifts when each user is served using~\ac{tdm}.}        
        Through numerical simulations, we demonstrate that the proposed method offers significantly improved \ac{sep} when compared with methods using quantized linear precoding for communication~\cite{jacobsson2017quantized_precoding} along with \ac{ris} phase shifts selected to merely strengthen the wireless link.  We also numerically show that \ac{ris} with discrete phase shifter having  3-4 bits resolution is sufficient to achieve the full benefits of the proposed scheme while suffering from a moderate radar performance loss of only about 1-2 dB in terms of worst-case target illumination power with respect to an unquantized sensing system. 
	
	\section{System model} \label{sec:sys:model}
	Consider an ISAC system for sensing $P$ targets while communicating with $K$ single-antenna \acp{ue}.  We assume that the targets and users are well separated and that the targets are directly visible to the \ac{idfbs}. Hence, the \ac{idfbs} can carry out sensing without the aid of the \ac{ris}. On the other hand, we assume that the direct path to the users are completely blocked and that the communication is enabled  only through the \ac{ris}. 
{\color{black}We also assume that the \acp{ue} are served in a \ac{tdm}-based scheduling scheme and consider a narrowband channel model for all wireless links.} We illustrate the system model in Fig.~\ref{fig:sys:model}.

 \begin{figure}
     \centering
     \includegraphics[height=4cm,width=8cm]{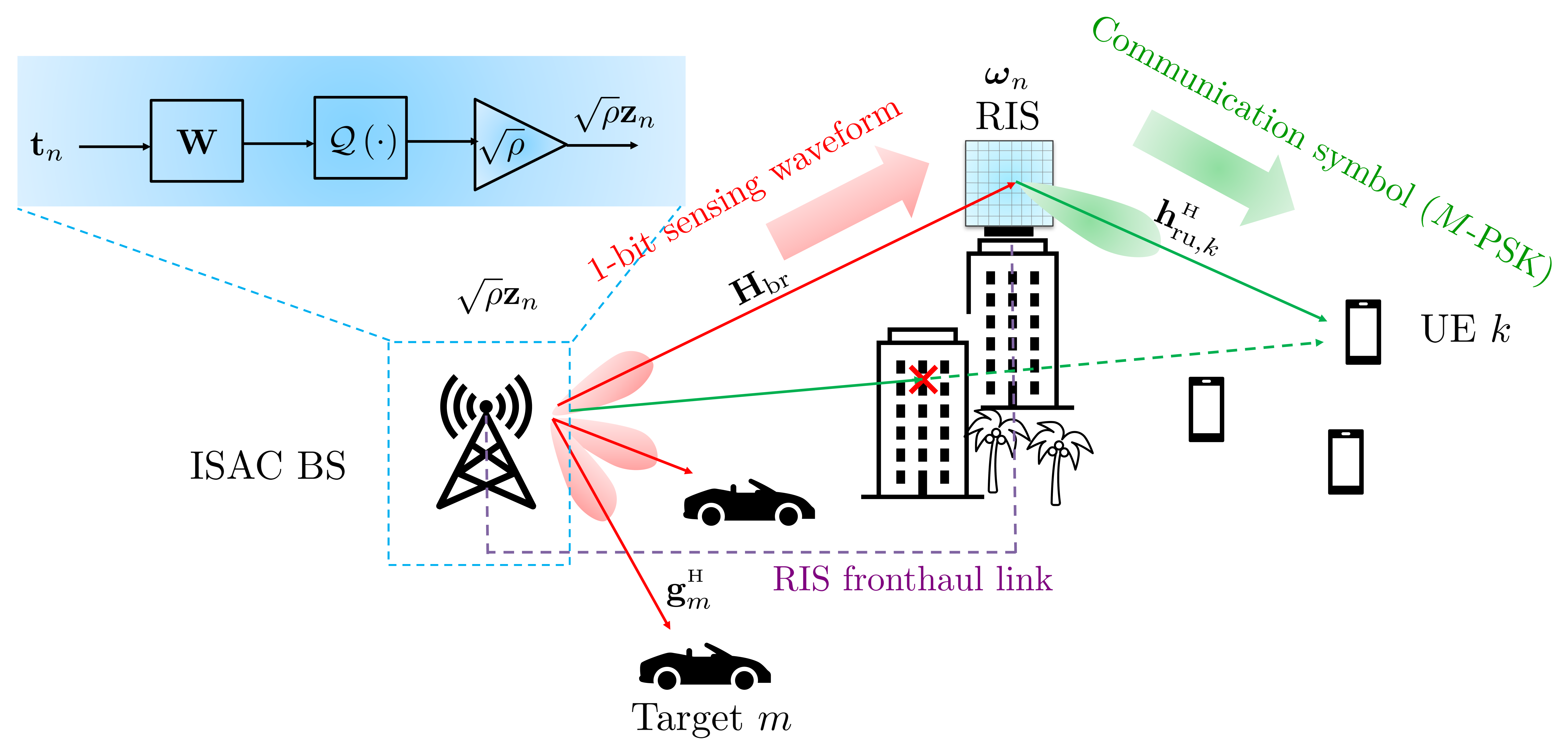}
     \caption{System model.}
     \label{fig:sys:model}
     \vspace{-5mm}
 \end{figure}

 \vspace{-2mm}

	 \subsection{Downlink transmit signal and target sensing}
	
		 The \ac{idfbs} is modeled as a \ac{ula} with $M$ critically-spaced antenna elements.  Each antenna of the \ac{idfbs} is assumed to be equipped with a 1-bit \ac{dac}.
		  Let $\vx_n \in \mbC^{M}$  and $\vz_n \in \mbC^{M}$ denote the discrete-time complex-baseband input and output of the 1-bit quantizers at the \ac{idfbs}, respectively, at time $n$.  Specifically, we have $\vz_n = \cQ \left(\vx_n\right)$,		 	
		 where the 1-bit \ac{dac} operator $\cQ\left( \cdot \right)$ is defined as $\cQ\left( z \right) = {\rm sign}\left( \Re \left(z \right) \right) + \jmath{\rm sign}\left( \Im \left(z \right) \right)$,	 for any $z \in \mbC$. Here, the sign function is 1 for non-negative inputs and is zero, otherwise. In the considered setting, the communication functionality is implemented by the \ac{ris} and the transmit signal at the \ac{idfbs} does not carry any useful information intended for the user. Hence, the \ac{idfbs} transmits 1-bit waveforms that are designed for radar sensing.  Let us model the unquantized signal as $\vx_n = \mW \vt_n$, where $\mW \in \mbC^{M \times M}$ is the transmit precoder and $\vt_n \sim \cC\cN(0,\mI)$ for all $n$.
  
  The covariance matrix corresponding to the signal before quantization is given by $\mR_x \overset{\Delta}{=} \mbE\left[ \vx_n \vx_n\rH \right] = \mW\mW\rH$.	 We normalize the covariance~(i.e., with diagonal entries normalized to 1) matrix as $\tilde{\mR}_x = {\rm diag}^{-1/2}(\mR_x)\mR_x{\rm diag}^{-1/2}(\mR_x)$, where ${\rm diag}(\mR_x)$ is the diagonal matrix containing the diagonal entries of $\mR_x$. The covariance matrix of the 1-bit quantized signal $\vz_n$, $\mR_z \overset{\Delta}{=} \mbE \left[ \vz_n \vz_n\rH \right]$, can be expressed in terms of $\tilde{\mR}_x$ using the arc-sine rule~\cite{van1966spectrum_clipped,jacobsson2017quantized_precoding}. Specifically, let ${\rm csin}^{-1}\left( z \right) = \sin^{-1}\left( \Re\left(z \right) \right)+ \jmath  \sin^{-1}\left( \Im\left(z \right) \right), \forall z\in \mbC$, denote the complex arc-sine function. Then, we have
	 \begin{equation} \label{eq:arc_sine}
	 	\mR_z = \frac{2}{\pi} {\rm csin}^{-1}\left( \tilde{\mR}_x \right).
	 \end{equation}
	 The 1-bit quantized signal $\vz_n$ is  scaled to meet the transmit power requirements. We consider an element-wise power constraint of $\rho  \overset{\Delta}{=} P/M$ per antenna at the \ac{idfbs} so that the 1-bit quantized transmit signal is given by $\sqrt{\rho}\vz_n$. 

  The signal  at the $m$th target at time $n$ is given by~[cf. Fig.~\ref{fig:sys:model}] $r_{m,n} = \sqrt{\rho} \vg_m\rH \vz_n$ with the corresponding target illumination power  $ \rho \mbE\left[ \left\vert \vg_m\rH \vz_n \right\vert^2 \right]$, where $\vg_m\rH$ is the wireless channel from the \ac{idfbs} to the $m$th target.
		 This signal is then reflected by the target and is received back at the \ac{idfbs}.
		 
    \subsection{RIS model}
		 We model \ac{ris} as a passive array of independently adjustable phase shifters, which are remotely controlled from the \ac{idfbs}. The $i$th \ac{ris} element phase shifts the incident signal at time  $n$ by $\omega_{i,n}$. {\color{black}We also assume that each \ac{ris} element can only provide discrete phase shifts with a resolution of $b$ bits, i.e., $\omega_{i,n} \in \cF = \{ 1, e^{\jmath \Delta\omega},\ldots, e^{\jmath(2^b-1)\Delta\omega} \}$, where $\Delta\omega = 2\pi/2^b$.}   Let us collect the \ac{ris} phase shifts in the vector $\boldsymbol{{\omega}}_n = [\omega_{1,n},\ldots,\omega_{N,n}]\rT \in \mbC^{N}$. 
   %Let $\mH_{\rm br}$ denote the MIMO channel between the \ac{idfbs} and \ac{ris}. 
   The signal at the \ac{ris} is given by~[cf. Fig.~\ref{fig:sys:model}] $\vu_n = {\sqrt{\rho}}\mH_{\rm br} \vz_n =  {\sqrt{\rho}}\mH_{\rm br}\cQ \left( \vx_n \right)$, where $\mH_{\rm br}$ is the MIMO channel from the \ac{idfbs} to the \ac{ris}.		
	  The signal $\vu_n$, i.e., the impinging 1-bit quantized signal, is then phase shifted to embed communication symbols and is  reflected to the users. Let $k$ be the index of the user served in the \ac{tdm} slot of interest.	 
	  The signal received at the $k$th \ac{ue} is given by 
		 \begin{equation} \label{eq:rec_signal_ue}
		 	y_{k,n} = \sqrt{\rho} \vh_{{\rm ru},k}\rH {\rm diag}\left(\boldsymbol{\omega}_n\right) \mH_{\rm br} \vz_n + w_n,
		 \end{equation}
		 where $w_n \sim \cC\cN(0,\sigma^2)$ is the receiver noise and the wireless channels are as defined in Fig.~\ref{fig:sys:model}. Henceforth, for brevity, we drop subscript $k$ from $y_{k,n}$ and $\vh_{{\rm ru},k}$ as we process each user using \ac{tdm}.

		 \subsection{RIS-assisted modulation} \label{sec:ris:modulation}
		  Let $s_{n}$ denote the communication symbol to be transmitted to the ($k$th) \ac{ue} at  time $n$. Suppose that $s_{n}$ is from an $M$-PSK constellation with $\vert s_{n} \vert = 1$.  Instead of transmitting $s_{n}$ from the \ac{idfbs} (i.e., $\vz_n$ as a function of $s_{n}$), we can leverage the phase shifting ability of \ac{ris} to modulate the signal $\vu_n$  and embed communication symbols.
	   
		   Suppose that {\color{black}$b=\infty$~(we account for finite $b$ in Sec.~\ref{sec:des:phase}) and} we choose the phase shift at the $i$th \ac{ris} element as $\omega_{i,n} =  \phi_{i,n} s_{n}$, where $\phi_{i,n}$  is the unknown phase shift with $\vert \phi_{i,n}\vert = 1$. For the sake of convenience, we refer to both ${\boldsymbol{\omega}}_{n}$ and ${\boldsymbol{\phi}}_{n}$ as \ac{ris} phase shifts. The signal received at the considered \ac{ue} can be written as $y_n = \sqrt{\rho} s_{n}  \alpha_{n}({ \boldsymbol{\phi}_n }) + w_n$,		 where $\alpha_{n}({\boldsymbol{{\phi}}_n}) =   \boldsymbol{\phi}_n\rT {\rm diag}\left( \vh_{{\rm ru}}\rH \right)\mH_{\rm br}\vz_n $ is the 
  modified channel between the \ac{ris} and the considered \ac{ue} during time $n$.

		 Note that the modified channel is not only a function of the physical wireless channels $\mH_{\rm br}$ and $\vh_{{\rm ru}}$, but also depends on the instantaneous values of the \ac{ris} phase shifts and the 1-bit transmit signal $\vz_n$, both of which are unknown to the \ac{ue}. Hence, even if the physical wireless links are time invariant (or follow a block fading model), the modified channel may change from one time instance to the other. It is not possible for the \acp{ue} to estimate the modified channel $\alpha_n$ using standard pilot-based estimation techniques and also it is not practical for the \ac{idfbs} to communicate the modified channel coefficient (or, the instantaneous values of $\vz_n$ and $\boldsymbol{\omega}_n$) to the user in real time. In Section~\ref{sec:des:phase}, we circumvent this issue by appropriately choosing the instantaneous phase shift so that the user can carry out symbol detection without having the need to know the modified channel.

	\section{Problem formulation}\label{sec:problem}
	
{\color{black} We now formulate the problem of designing the transmit precoder $\mW$ and the instantaneous \ac{ris} phase shifts $\boldsymbol{\omega}_n$.}
	
	\subsection{Transmit precoder design}
	
For reliable sensing, the \ac{idfbs} should radiate power towards all target directions of interest. Moreover, sufficient power should also reach the \ac{ris} so that it can modulate $\vu_n$ to send information to the \acp{ue}. Hence, we require a transmit beampattern where the power is radiated towards both the target directions and the \ac{ris}.
To do so, we propose to design the transmit precoder, $\mW$, to achieve the desired transmit beampattern.  

Let us define the array response vector of the \ac{idfbs} towards the direction $\theta$ as  $\va(\theta) = [1, e^{-\jmath\pi \sin \theta},\ldots,e^{-\jmath \pi (M-1) \sin \theta}]\rT$. The transmit power radiated towards direction $\theta$ is given by $J\left( \theta \right) = \va\rH(\theta) \mR_z \va(\theta)$, where recall $\mR_z$ is the covariance matrix of $\vz_n$.   Let $d(\theta)$ be the desired beampattern corresponding to the angle $\theta$. Consider a discrete grid of $D$ angles $\{\theta_i, 1\leq i \leq D\}$. The beampattern mismatch error can be evaluated over the discrete angular grid as $L(\mR_z,\tau) = \frac{1}{D}\sum_{\ell=1}^{D} \vert J({\theta}_\ell) - \tau d({\theta}_\ell)  \vert^2$,
where $\tau$ is the autoscale parameter.

The transmit beampattern error, $L(\mR_z,\tau)$, is determined by the quantized transmit covariance matrix $\mR_z$, which in turn depends  on the normalized unquantized covariance matrix $\tilde{\mR}_x$ as in~\eqref{eq:arc_sine}.  {\color{black}To design $\mW$, it is thus sufficient to design the unquantized transmit covariance matrix $\tilde{\mR}_x$}.  Using~\eqref{eq:arc_sine}, we can mathematically formulate unquantized transmit precoder design problem as 
	\begin{subequations} \label{sp2}
		\begin{align} 
			(\cP1):\quad	\underset{\tilde{\mR}_x \succcurlyeq \boldsymbol{0},\tau} {\text{minimize}} &\quad L\left( \mR_z,\tau \right) \nonumber  \\ % \label{sub2_cost}
			\text{s. to}    & \quad \mR_z = \frac{2}{\pi} {\rm csin}^{-1}\left( \tilde{\mR}_x \right) \label{tx_precoder_ac} \\
	 & \quad  [\tilde{\mR}_x]_{i,i} = 1, \quad i=1,\ldots,M \label{tx_precoder_norm_constraing},
		\end{align} 
	\end{subequations}
 where~\eqref{tx_precoder_ac} is due to the arc-sine law and~\eqref{tx_precoder_norm_constraing} is due to the fact that $\tilde{\mR}_x$ is normalized.
Due to the presence of the arc-sine constraint~\eqref{tx_precoder_ac}, ($\cP1$) is non-convex and is difficult to solve. In the paper, we propose a convex relaxation based solver for $(\cP1)$. 
\vspace{-3mm}

\subsection{RIS phase shift design for modulation}
{\color{black}With the proposed \ac{ris}-assisted modulation scheme, communication sub-system in each time instant $n$ is a \ac{siso} system with gain $\alpha_n$ with its \ac{sep} determined by the received \ac{snr}~\cite{tse2005fundamentals}.}
Hence, we propose to formulate the RIS phase shift design by choosing ${\boldsymbol{{\phi}}_n}$ so that the \textit{instantaneous} \ac{snr} is maximized. We wish to re-emphasize that this design strategy is different from choosing \ac{ris} phase shifts solely based on the physical wireless channels $\mH_{\rm br}$ and $\vh_{{\rm ru}}$ to maximize the strength of the cascaded channel ${\rm diag}\left( \vh_{{\rm ru}}\rH \right)\mH_{\rm br}$. By defining $\vh_{c,n} = {\rm diag}\left( \vh_{{\rm ru}}\rH \right)\mH_{\rm br}\vz_n $, we have the modified channel $\alpha_{n}  = \boldsymbol{\phi}_n\rT\vh_{c,n}$. The instantaneous symbol detection \ac{snr} at the considered \ac{ue} at time $n$ is given by $\gamma_{n}(\boldsymbol{\phi}_n) = \vert  \boldsymbol{\phi}_n\rT\vh_{c,n}  \vert^2/{\sigma^2} =   \vert \sum_{i=1}^{N} \phi_{i,n} [\vh_{c,n}]_i   \vert^2 /{\sigma^2}$.
Let us recall that the instantaneous \ac{ris} phase shifts are defined as $\omega_{i,n} = \phi_{i,n}s_n$ with $\vert \phi_{i,n} \vert = 1$ ~[cf. Sec.~\ref{sec:ris:modulation}]. The instantaneous \ac{ris} phase shifts $\boldsymbol{\phi}_n$ are computed by solving
{\color{black}\begin{equation} \label{eq:arg_max_omega}
 (\cP2): \quad \boldsymbol{{\phi}}_n^{\star} =  \,\,	\underset{{\boldsymbol{{\phi}}}_n} {\text{argmax}} \,\, \medskip \gamma_{n}(\boldsymbol{\phi}_n)   \quad {\text{s. to}}  \quad \vert \phi_{i,n} \vert =1,\quad \forall i.  
\end{equation}
 We remark that the overall \ac{ris} phase shift is  
 $\boldsymbol{\omega}_n = \boldsymbol{\phi}_n s_n $.  Hence, discrete phase shift constraint is only for $\boldsymbol{\omega }_n$. }Note that  (\cP2) needs to be solved for each channel use or each $n$.

To summarize, we design the transmit precoder $\mW$ to achieve the desired beampattern at the \ac{idfbs}. {\color{black}Instantaneous \ac{ris} phase shifts $\boldsymbol{{\omega}}_n$ are then designed to modulate $\vu_n$ with $M$-PSK symbols and to maximize the instantaneous received \ac{snr}  at the considered user. }
	
		\begin{figure*}[t]
		\begin{subfigure}[c]{0.48\columnwidth}\centering
			\includegraphics[width=\columnwidth]{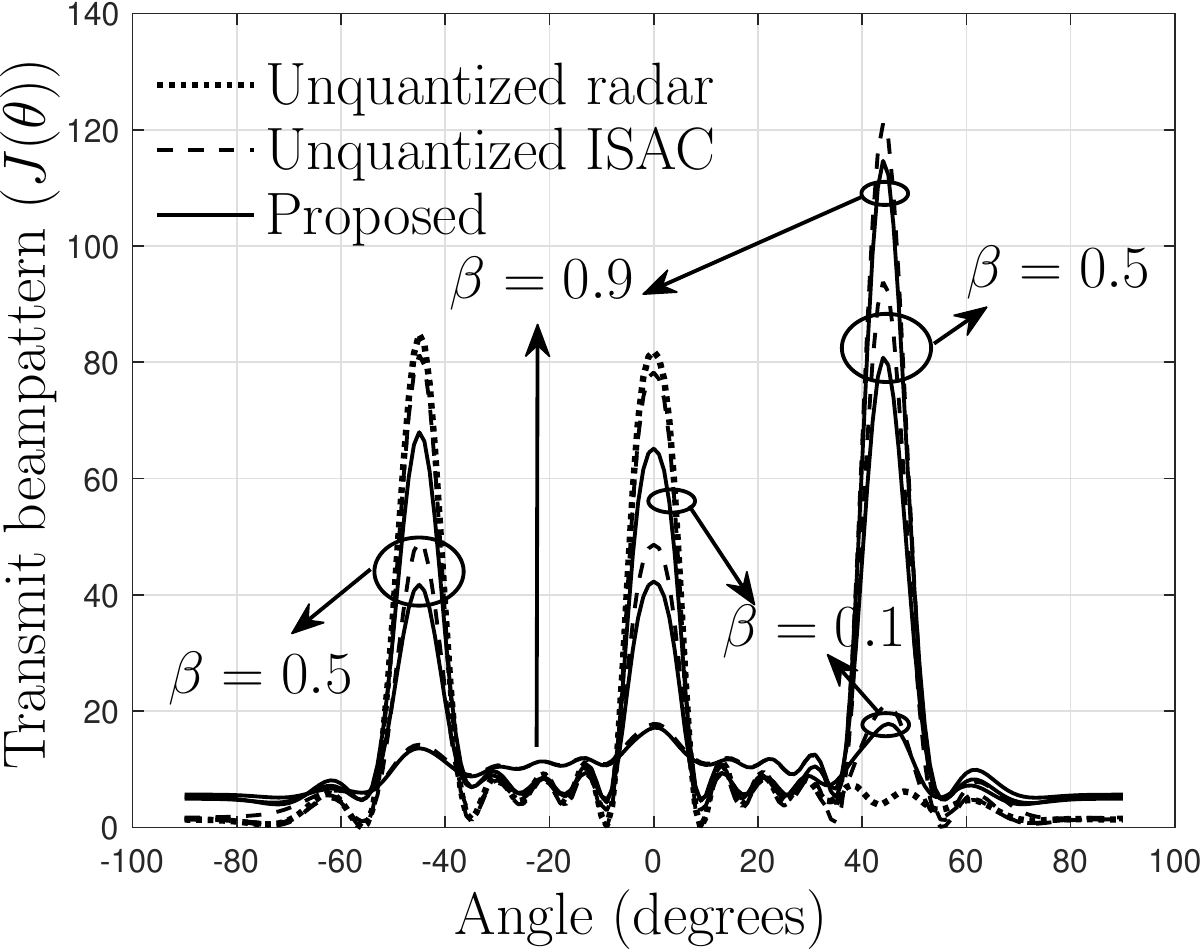}
			\caption{}
			\label{fig:radar_beam}
		\end{subfigure}
		~
		\begin{subfigure}[c]{0.48\columnwidth}\centering
			\includegraphics[width=\columnwidth]{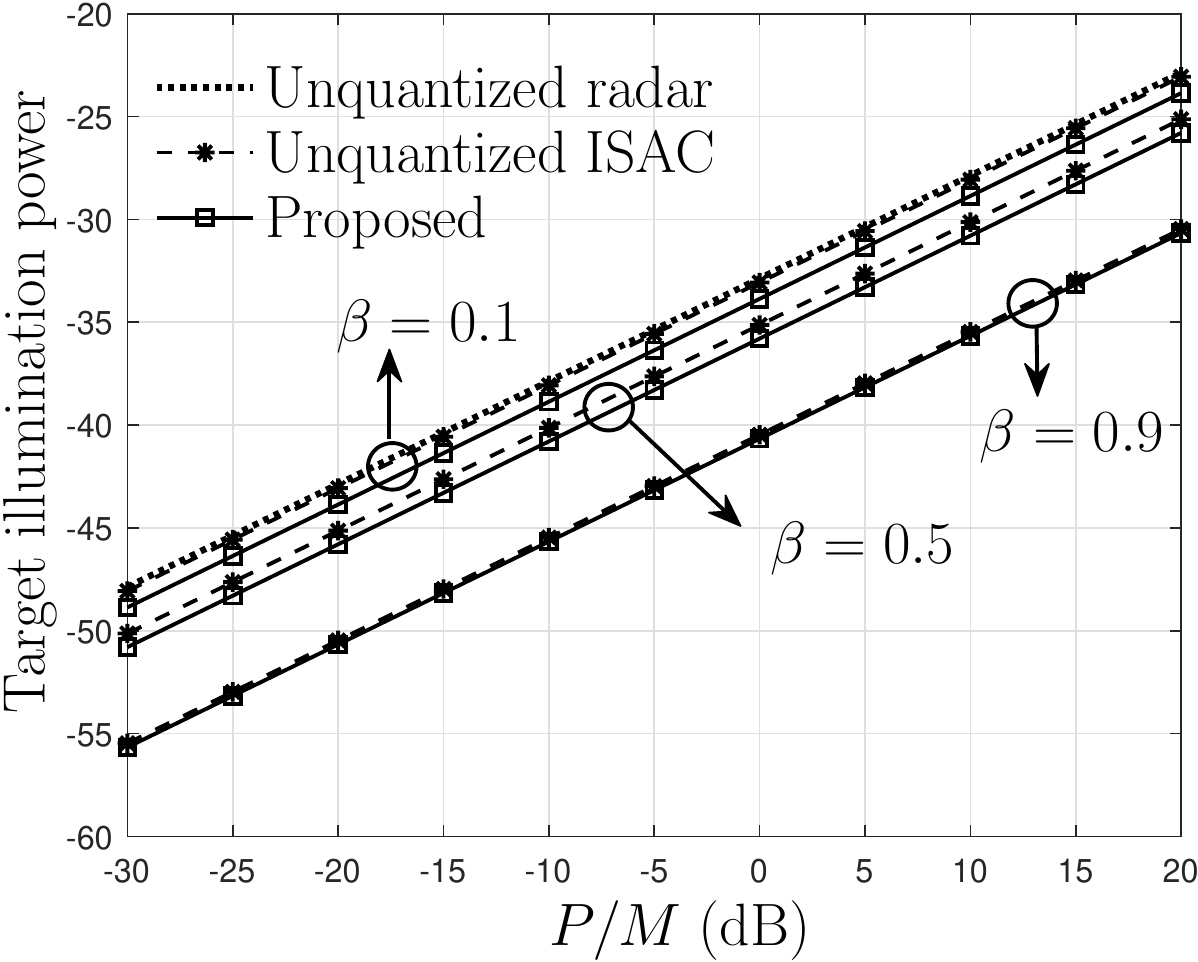}
			\caption{}
			\label{fig:radar_pow}
		\end{subfigure}
		~
		\begin{subfigure}[c]{0.48\columnwidth}\centering
			\includegraphics[width=\columnwidth]{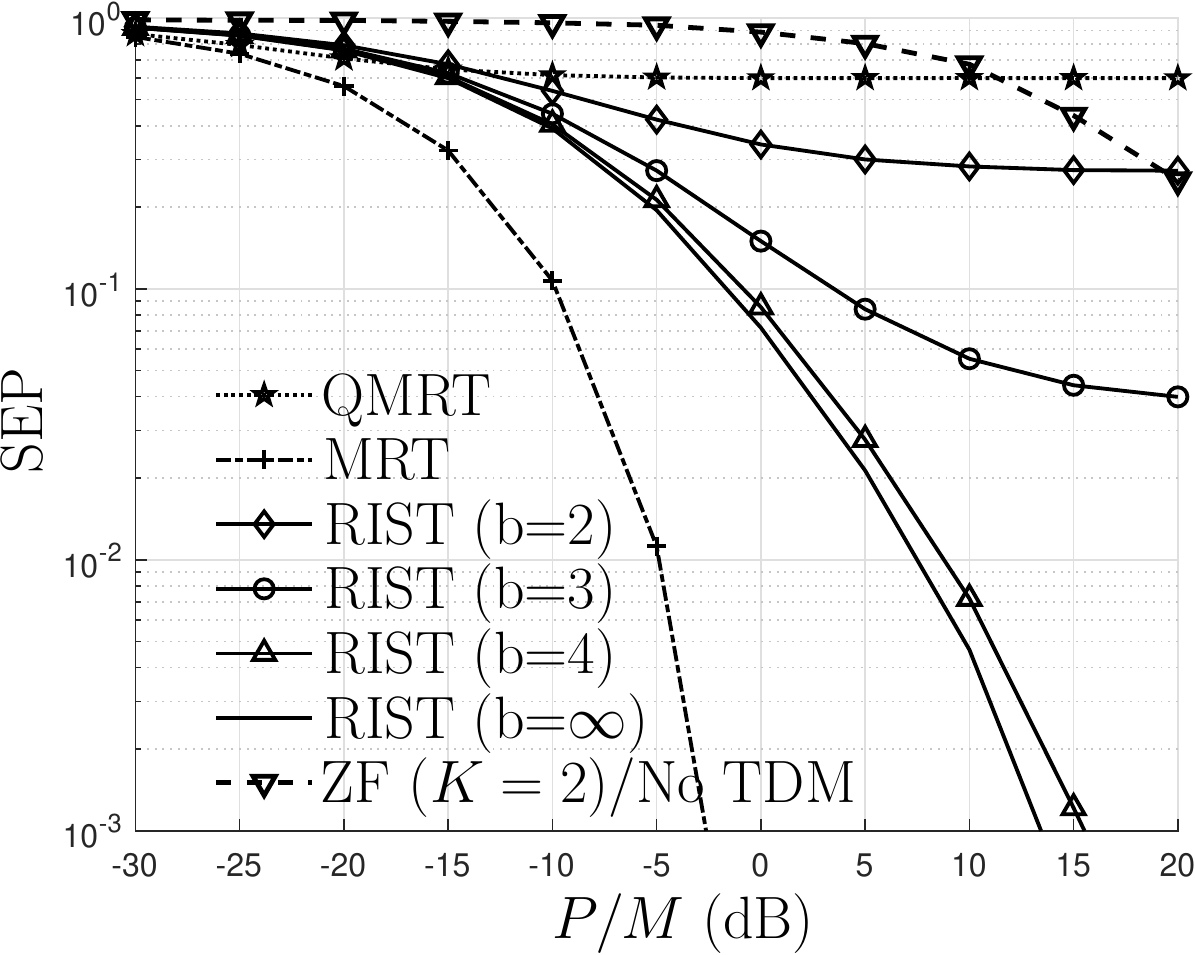}
			\caption{}
			\label{fig:comm_SEP1}
		\end{subfigure}
		~
	\begin{subfigure}[c]{0.48\columnwidth}\centering
		\includegraphics[width=\columnwidth]{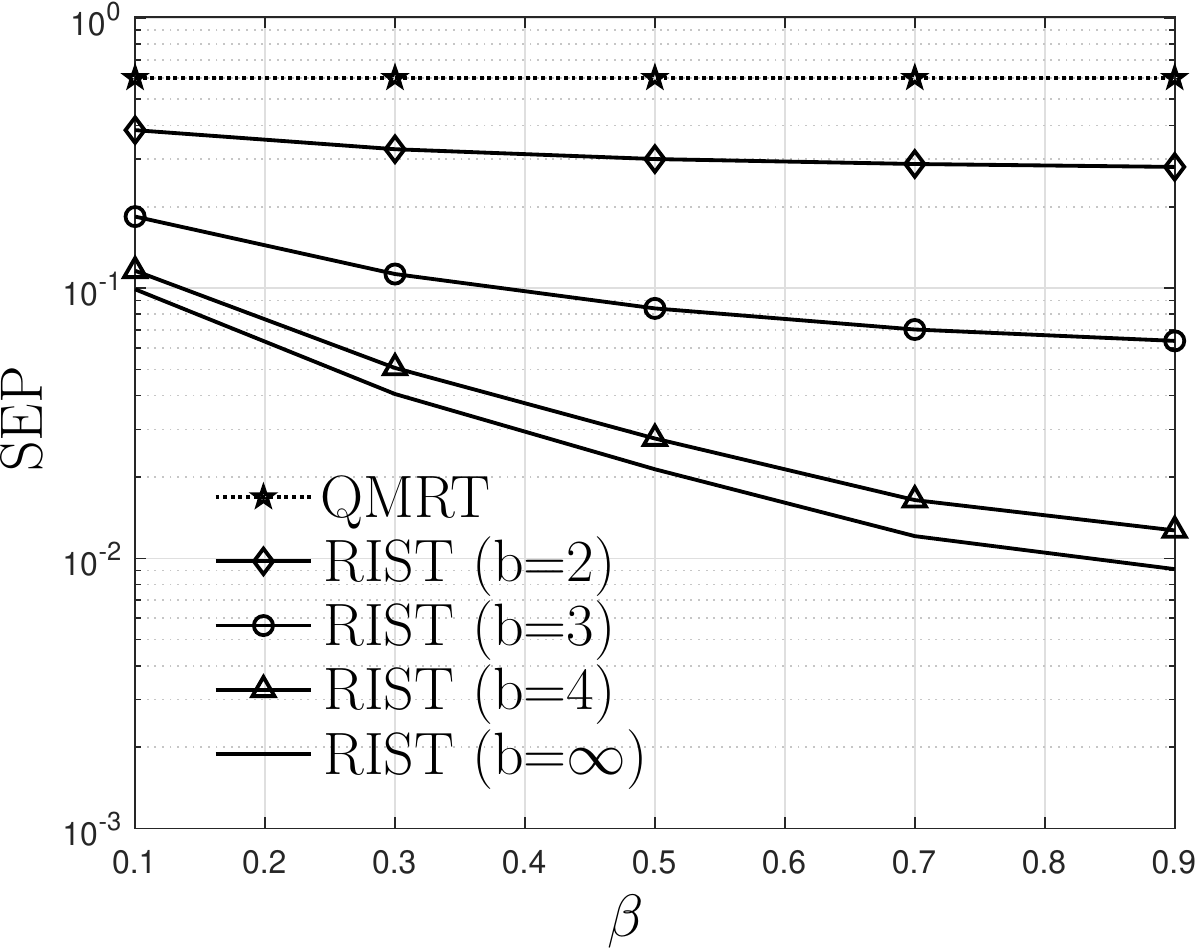}
		\caption{}
		\label{fig:comm_SEP2}
	\end{subfigure}
		\caption{\small (a) Transmit beampattern. (b) Worst-case target illumination power. (c) SEP v/s $P/M$. (d) SEP v/s $\beta$. }
		\label{fig:simulations}
		 \vspace{-5mm}
	\end{figure*}

 \vspace{-5mm}	
	\section{Proposed solution}
	In  this section, we develop a convex-relaxation based solver for transmit precoding. We also present a closed-form expression for the optimal solution of the instantaneous \ac{ris} phase shift $\boldsymbol{{\phi}}_n^{\star}$.
	
	\subsection{Quantization-aware transmit precoder design}
	The major difficulty in solving $(\cP1)$ is due to the non-convex constraint~\eqref{tx_precoder_ac}. Instead of directly optimizing $(\cP1)$ to find  $\tilde{\mR}_x$, it is sufficient to compute the  optimal output transmit covariance matrix $\mR_z$, since  $\tilde{\mR}_x$ can be then computed from $\mR_z$ using the arc-sine law. Hence, $(\cP1)$ can be equivalently written as
 {\color{black}
		\begin{subequations} \label{prob:tx_cov}
		\begin{align} 
			\underset{{\mR}_z,\tau} {\text{minimize}} &\quad L\left(\mR_z,\tau \right) \label{opt_tx_obj} \\ % \label{sub2_cost}
		\text{s. to}	& \quad {\rm csin}\left( \frac{\pi}{2} {\mR}_z \right)  \succcurlyeq \boldsymbol{0} \label{opt_tx_cost_nc} \\
			 & \quad [{\mR}_z]_{i,i} = 1; \,\,	 \Re \left( \left[ \mR_z \right]_{i,j} \right), \Im \left( \left[ \mR_z \right]_{i,j} \right) \in [-1,1] \nonumber
		\end{align} 
	\end{subequations}
for $i,j = 1,\ldots,M$, where the last constraint is due to 1-bit quantization and also ensures that the diagonal entries of $\tilde{\mR}_x = {\rm csin}\left( \frac{\pi}{2} {\mR}_z \right)$ are 1.} The complex sine function is defined as ${\rm csin}\left( z \right) = \sin \Re \left( z\right)  + \jmath \sin \Im \left( z \right), \forall z \in \mbC$. However, the constraint~\eqref{opt_tx_cost_nc} is still non-convex. 

To circumvent this issue, we first relax~\eqref{prob:tx_cov} by replacing~\eqref{opt_tx_cost_nc} with a less restrictive \ac{psd} constraint i.e., $\mR_z \succcurlyeq \boldsymbol{0}$, leading to  an \ac{sdp}.  The resulting relaxed problem is the usual \ac{mimo} radar transmit precoding problem in an unquantized setting. The relaxed problem is convex and can be efficiently solved using off-the-shelf solvers~\cite{stoica2007on_probing_signal}.  However, the main difficulty that remains is  obtaining the corresponding unquantized covariance matrix $\tilde{\mR}_x$ (and hence the transmit precoder $\mW$) from the solution to the relaxed problem since the arc-sine law is not necessarily satisfied.

Let $\mR_z^\star$ be the optimal solution to the relaxed \ac{sdp}. A direct application of the arc-sine law implies that the corresponding unquantized transmit covariance is given by $ \hat{\mR}_x =  {\rm csin}\left( \frac{\pi}{2} {\mR}_z^\star \right)$. While $\hat{\mR}_x$ is a Hermitian matrix,  it is not always guaranteed that $\hat{\mR}_x \succcurlyeq \boldsymbol{0}$. Therefore, to obtain a valid \ac{psd} unquantized covariance matrix $\tilde{\mR}_x^\circ$, we compute the nearest \ac{psd} matrix to $\hat{\mR}_x$. To do that, let us compute the eigenvalue decomposition of  the Hermitian matrix $\hat{\mR}_x$ as $\hat{\mR}_x = \sum_{k=1}^{M}\lambda_k \vq_k \vq_k \rH$,
where $\left(\lambda_k,\vq_k\right)$ is an eigenvalue-eigenvector pair of $\hat{\mR}_x$. The nearest~\ac{psd} matrix to $\hat{\mR}_x$ is then given by dropping the terms corresponding to the negative eigenvalues from the sum in eigenvalue decomposition. Let $\cI = [i_1,\ldots,i_r]$ denote the indices of non-negative eigenvalues. Finally, the required unquantized covariance matrix $\mR_x^\circ$ and the corresponding transmit precoder $\mW^\circ$ are given as follows:
\begin{equation} \label{eq:w:design}
	\mR_x^\circ = \sum_{k \in \cI}\lambda_k\vq_k\vq_k\rH; \quad \mW^\circ = [\sqrt{\lambda_{i_1}}\vq_{i_1},\ldots,\sqrt{\lambda_{i_r}}\vq_{i_r}, \boldsymbol{0}],
\end{equation}
where $\boldsymbol{0} $ is the all-zero matrix of size $M \times (M-r)$.
Required normalized transmit covariance matrix $\tilde{\mR}_x^\circ$ can be then obtained by normalizing the diagonal entries of $\mR_x^\circ$. {\color{black}In Section~\ref{sec:simulatios}, we show that the proposed method of selecting $\tilde{\mR}_x^\circ$ offers comparable sensing performance to that of unquantized ISAC systems.} 

 \vspace{-3mm}
\subsection{RIS phase shift design} \label{sec:des:phase}
The instantaneous phase shift $\boldsymbol{\phi}_n$ is designed to maximize the instantaneous received SNR. The solution to~\eqref{eq:arg_max_omega} admits a closed-form expression and is given in the following theorem.

\begin{mytheo} Given the 1-bit transmit waveform $\vz_n$ and the channels $\mH_{\rm br}$ and $\vh_{{\rm ru}}$, the optimal value of the instantaneous \ac{ris} phase shift that maximizes the received instantaneous \ac{snr} is 
	\begin{equation} \label{design:phase}
			{{\phi}}_{i,n}^\star = e^{-\jmath {\rm angle}\left( \left[ \vh_{c,n} \right]_i \right)}.
	\end{equation}
\end{mytheo}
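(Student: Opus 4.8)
The plan is to recognize $(\cP2)$ as a per-element phase-alignment problem whose solution follows from a triangle-inequality bound that is tight at the claimed point. First I would observe that $\sigma^2>0$ is a fixed constant and $t\mapsto t^2/\sigma^2$ is strictly increasing on $[0,\infty)$, so maximizing $\gamma_n(\boldsymbol{\phi}_n)=\vert\boldsymbol{\phi}_n\rT\vh_{c,n}\vert^2/\sigma^2$ over the feasible set $\{\boldsymbol{\phi}_n:\vert\phi_{i,n}\vert=1,\ \forall i\}$ is losslessly equivalent to maximizing the modulus $\vert\boldsymbol{\phi}_n\rT\vh_{c,n}\vert=\left\vert\sum_{i=1}^{N}\phi_{i,n}[\vh_{c,n}]_i\right\vert$.

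Next, for every feasible $\boldsymbol{\phi}_n$ the triangle inequality yields
\[
\left\vert\sum_{i=1}^{N}\phi_{i,n}[\vh_{c,n}]_i\right\vert \;\le\; \sum_{i=1}^{N}\vert\phi_{i,n}\vert\,\big\vert[\vh_{c,n}]_i\big\vert \;=\; \sum_{i=1}^{N}\big\vert[\vh_{c,n}]_i\big\vert,
\]
which is a constant independent of $\boldsymbol{\phi}_n$, so $\gamma_n(\boldsymbol{\phi}_n)\le\big(\sum_{i}\vert[\vh_{c,n}]_i\vert\big)^2/\sigma^2$ on the whole feasible set. Then I would check that the candidate $\phi_{i,n}^\star=e^{-\jmath\,{\rm angle}([\vh_{c,n}]_i)}$ is feasible (it has unit modulus by construction) and attains this bound: writing $[\vh_{c,n}]_i=\vert[\vh_{c,n}]_i\vert\,e^{\jmath\,{\rm angle}([\vh_{c,n}]_i)}$ gives $\phi_{i,n}^\star[\vh_{c,n}]_i=\vert[\vh_{c,n}]_i\vert\ge 0$, hence $(\boldsymbol{\phi}_n^\star)\rT\vh_{c,n}=\sum_i\vert[\vh_{c,n}]_i\vert$ and $\gamma_n(\boldsymbol{\phi}_n^\star)=\big(\sum_i\vert[\vh_{c,n}]_i\vert\big)^2/\sigma^2$. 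A feasible point that meets the upper bound on the objective is optimal, which proves~\eqref{design:phase}. For indices with $[\vh_{c,n}]_i=0$ the choice of $\phi_{i,n}$ is immaterial, and more generally the maximizer is unique only up to a common unit-modulus rotation of all entries, which leaves $\gamma_n$ invariant and is in any case absorbed into the data symbol $s_n$ through $\omega_{i,n}=\phi_{i,n}s_n$.

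There is no genuinely hard step here; the argument is elementary. The two points worth stating carefully are (i) that the first reduction is rigorously lossless, which is just monotonicity of the squaring map, and (ii) implementability of the closed form: $\vh_{c,n}={\rm diag}(\vh_{{\rm ru}}\rH)\mH_{\rm br}\vz_n$ is available at the \ac{idfbs} because $\vz_n$ is generated there and the channels $\mH_{\rm br},\vh_{{\rm ru}}$ are assumed known, so~\eqref{design:phase} can be evaluated on the fly for each channel use. The equality case of the triangle inequality — all summands sharing a common argument — is exactly what singles out the stated per-element phase choice.
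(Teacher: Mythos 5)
Your argument is correct and is essentially the same as the paper's: both bound $\vert\sum_{i}\phi_{i,n}[\vh_{c,n}]_i\vert$ by $\sum_{i}\vert[\vh_{c,n}]_i\vert$ via the triangle inequality and verify that the per-element phase choice $\phi_{i,n}^\star=e^{-\jmath\,{\rm angle}([\vh_{c,n}]_i)}$ attains equality. Your additional remarks on the monotonicity of the squaring map and on the zero-entry/global-rotation ambiguity are fine but not needed beyond what the paper states.
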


\begin{proof}
The received \ac{snr} is proportional to the received signal power at the corresponding \ac{ue}.	
The instantaneous received signal power at the considered \ac{ue} at time $n$ is given by
$r_n =  \left\vert \sum_{i=1}^{N} \phi_{i,n} \left\vert  [\vh_{c,n}]_i \right\vert e^{\jmath {\rm angle} \left( \left[ \vh_{c,n} \right]_i \right) }   \right\vert^2  \overset{(a)}{\leq}  \left\vert \sum_{i=1}^{N}  \left \vert \left[\vh_{c,n} \right]_i \right\vert  \right\vert^2,$
where (a) is due to the triangle inequality. It is straightforward to see that $r_n$ attains the upper bound with equality if the \ac{ris} phase shifts are selected as per~\eqref{design:phase}.
\end{proof}
 \vspace{-4mm}

Since the \ac{idfbs} has full knowledge of all the wireless channels and the 1-bit transmit signal $\vz_n$, the phase shift update can be readily carried out using~\eqref{design:phase}.  {\color{black}	Given $\boldsymbol{\phi}_n^\star$, the required $b$-bit quantized \ac{ris} phase shifts is selected by projecting $\boldsymbol{\omega}_n^\star = \boldsymbol{\phi}_n s_n$ to the nearest point in the feasible set $\cF$.
{\color{black}With $b=\infty$, we have $\alpha_{n}\left( {\boldsymbol{\phi}}_n^\star \right) = \sum_{i=1}^{N} \left\vert \left[ \vh_{c,n}  \right]_i \right\vert $, which is purely real. Hence, the \ac{ue} can detect the transmitted $M$-PSK signal without the need to know the effective scaling since the channel does not add any additional phase. We show that the proposed scheme works well even for finite $b$ in the next section.}
To summarize, we design the transmit precoder using a solver based on \ac{sdp}~\eqref{eq:w:design}. RIS phase shifts are then selected using~\eqref{design:phase} followed by a projection onto $\cF$ to transmit $M$-PSK symbols to the users, where each user is served in a \ac{tdm} fashion.}

 \vspace{-5mm} 
\section{Simulations} \label{sec:simulatios}
	
	In this section, we present numerical simulations to demonstrate the performance of the proposed algorithm.  Throughout the simulation, we assume two targets at $[-45^\circ,0^\circ]$ with $M=16$ and $N=100$.  The transmitted symbols are  $64$-PSK modulated and the \acp{ue} employ maximum-likelihood detectors to detect the transmitted symbol. The \ac{idfbs} and \ac{ris} are located at $(0,0,0)$m, and $(50,50,10)$m, respectively. User locations are drawn at random from a $30$m$\times50$m rectangular region with the top-left corner at $(10,50,0)$m. All wireless links are modeled using a pathloss model $30 + 22\log d$ dB with Rician distributed user links and  line-of-sight target links. All SEP plots are obtained by averaging over $10^4$  independent channel realizations, each with $200$ symbol transmissions. 	The desired transmit beampattern is selected to be a superposition of box functions of width $10$ degrees centered around the target directions with a relative strength of $\left(1-\beta\right)/P$ and the direction of \ac{ris} w.r.t. the \ac{idfbs} with a relative strength of $\beta$. A larger trade-off factor $\beta<1$ signifies a higher priority for the communication performance.

In Fig.~\ref{fig:simulations}(a) and Fig.~\ref{fig:simulations}(b), we compare the transmit beampattern and the worst-case target illumination power of our method (\texttt{Proposed}) with that of benchmark methods, namely, \texttt{Unquant-} \texttt{ized radar}~\cite{stoica2007on_probing_signal} and \texttt{Unquantized ISAC} systems.  We simulate \texttt{Unquantized ISAC} using the solution obtained by dropping the arc-sine constraint~\eqref{opt_tx_cost_nc}. We wish to remark that the \texttt{Unquantized radar}~\cite{stoica2007on_probing_signal} is designed to form peaks only towards the target directions and not towards the \ac{ris}.  For low values of $\beta$, the worst-case target illumination power of the proposed method is comparable to that of \texttt{Unquantized radar}.  As expected, more power is radiated towards the \ac{ris} for larger values of $\beta$ leading to smaller target illumination power. Furthermore, especially for moderate to large $\beta$, the beampatterns and the target illumination power of \texttt{Proposed} is comparable to that of \texttt{Unquantized ISAC}, demonstrating that the solution to the relaxed problem is close to that of the actual problem.

We present  \ac{sep} of different methods in Fig.~\ref{fig:simulations}(c) and Fig.~\ref{fig:simulations}(d). \texttt{MRT} is the unquantized maximal ratio transmit precoder, \texttt{ZF} is the zero forcing precoder and \texttt{QMRT} is the MRT with 1-bit quantization~\cite{jacobsson2017quantized_precoding}. For benchmark schemes, \ac{ris} phase shifts  are assumed to be of infinite precision (i.e., no quantization) and is selected to maximize the gain towards the user direction.  Since the direct paths are blocked and the \ac{ris} as such cannot carry out any precoding due to its passive nature, serving multiple users at the same time leads to significantly high multi-user interference leading to high \ac{sep}. From Fig.~\ref{fig:simulations}(c), we observe that even for $K=2$, an unquantized zero-forcing precoder with infinite precision \ac{ris} itself is not able to reliably serve all users. Hence, for the rest of the simulation, we assume that all users are served using \ac{tdm}. The proposed method (\texttt{RIST}) with just $2$ bits of RIS phase shift resolution is significantly better than that of \texttt{QMRT} since the impact of 1-bit \ac{dac} at the transmitter is alleviated by performing modulation using the \ac{ris}. With 4-bit resolution \ac{ris}, we perform significantly better than \texttt{QMRT} and close to \ac{ris} with continuous phase shifters in the considered setting.

From Fig.~\ref{fig:simulations}(d), we can observe that \ac{sep} of \texttt{RIST} decrease with an increase in $\beta$, which is expected since a larger $\beta$ radiates more power towards the \ac{ris} resulting in stronger modified channels $\alpha_{n}$ to the users. Even when  $\beta=0.1$ (i.e., less priority for communication), the proposed scheme with $b=4$ offers an order of improvement in \ac{sep} when compared with \texttt{QMRT}, despite the latter using infinite precision \ac{ris}. Moreover, \texttt{RIST} with $b=4$ already approaches the performance of \texttt{RIST} with $b=\infty$. Since the cost and complexity of  \ac{ris} with $2-4$ bits of phase shifter resolution is significantly less than that of using \acp{dac} with $2-4$ bits of resolution for each antenna at the \ac{idfbs}, the proposed scheme is an attractive option for the next generation of wireless systems.
	
     \vspace{-4mm}
	\section{Conclusions} \label{sec:conclusion}

	In this paper, we introduced a novel scheme where an \ac{ris} is introduced  in an \ac{isac}  system with 1-bit DACs to mitigate the effects of coarse quantization and to enable coexistence of sensing and communication functionalities. Specifically, we designed the transmit precoders to obtain 1-bit radar waveforms that achieve certain desired transmit beampattern. The \ac{ris} phase shifts are then designed to modulate the 1-bit radar waveform to send information to the users. The proposed scheme offers significantly better symbol error probabilities for users when compared with the state-of-the-art quantized MIMO systems while suffering from a moderate radar performance loss of about $1$ to $2$ dB compared to a MIMO radar system using infinite precision \acp{dac}.

	% \pagebreak
%	\clearpage
	\bibliographystyle{IEEEtran}
	\bibliography{IEEEabrv,bibliography}

\end{document}